\newtheorem{theorem}{Theorem}[section]
\newtheorem{defn}{Definition}
\newtheorem{thm}{Theorem}[section]
\newtheorem{cor}[thm]{Corollary}
\newtheorem{prop}{Proposition}
\newtheorem{lem}[thm]{Lemma}
\newtheorem{conj}[thm]{Conjecture}
\newtheorem{constr}[thm]{Construction}
\newtheorem{note}{Remark}
\newtheorem{example}{Example}
\newcommand{\bit}{\begin{itemize}}
	\newcommand{\eit}{\end{itemize}}
\newcommand{\bcor}{\begin{cor}}
	\newcommand{\ecor}{\end{cor}}
\newcommand{\beq}{\begin{equation}}
\newcommand{\eeq}{\end{equation}}
\newcommand{\beqn}{\begin{equation*}}
\newcommand{\eeqn}{\end{equation*}}
\newcommand{\bea}{\begin{eqnarray}}
\newcommand{\eea}{\end{eqnarray}}
\newcommand{\bean}{\begin{eqnarray*}}
	\newcommand{\eean}{\end{eqnarray*}}
\newcommand{\ben}{\begin{enumerate}}
	\newcommand{\een}{\end{enumerate}}
\newcommand{\bdefn}{\begin{defn}}
	\newcommand{\edefn}{\end{defn}}
\newcommand{\bnote}{\begin{note}}
	\newcommand{\enote}{\end{note}}
\newcommand{\bprop}{\begin{prop}}
	\newcommand{\eprop}{\end{prop}}
\newcommand{\blem}{\begin{lem}}
	\newcommand{\elem}{\end{lem}}
\newcommand{\bthm}{\begin{thm}}
	\newcommand{\ethm}{\end{thm}}
\newcommand{\bconj}{\begin{conj}}
	\newcommand{\econj}{\end{conj}}
\newcommand{\bconstr}{\begin{constr}}
	\newcommand{\econstr}{\end{constr}}
\newcommand{\bpf}{\begin{proof}}
	\newcommand{\epf}{\end{proof}}
\begin{document}
\title{An Alternate Construction of an Access-Optimal Regenerating Code with Optimal Sub-Packetization Level} 
\author{Gaurav Kumar Agarwal, Birenjith Sasidharan and P. Vijay Kumar \\
 	Department of ECE, Indian Institute of Science, Bangalore, 560012 India \\
 	(email: \{agarwal, biren, vijay\}@ece.iisc.ernet.in) 	
 }

\maketitle

\begin{abstract}
Given the scale of today's distributed storage systems, the failure of an individual node is a common phenomenon.  Various metrics have been proposed to measure the efficacy of the repair of a failed node, such as the amount of data download needed to repair (also known as the repair bandwidth), the amount of data accessed at the helper nodes, and the number of helper nodes contacted. Clearly, the amount of data accessed can never be smaller than the repair bandwidth. In the case of a help-by-transfer code, the amount of data accessed is equal to the repair bandwidth. It follows that a help-by-transfer code possessing optimal repair bandwidth is access optimal. The focus of the present paper is on help-by-transfer codes that employ minimum possible bandwidth to repair the systematic nodes and are thus access optimal for the repair of a systematic node. 

The zigzag construction by Tamo et al. in which both systematic and parity nodes are repaired is access optimal. But the sub-packetization level required is $r^k$ where $r$ is the number of parities and $k$ is the number of systematic nodes. To date, the best known achievable sub-packetization level for access-optimal codes is $r^{k/r}$ in a MISER-code-based construction by Cadambe et al. in which only the systematic nodes are repaired and where the location of symbols transmitted by a helper node depends only on the failed node and is the same for all helper nodes.  Under this set-up, it turns out that this sub-packetization level cannot be improved upon. In the present paper, we present an alternate construction under the same setup, of an access-optimal code repairing systematic nodes, that is inspired by the zigzag code construction and that also achieves a sub-packetization level of $r^{k/r}$. 
\end{abstract}

\begin{IEEEkeywords} Distributed storage, array codes, access-optimal, regenerating codes, sub-packetization.
\end{IEEEkeywords}

\section{Introduction}

In a distributed storage system, the data file comprising of $B$ data symbols drawn from a finite field $\mathbb{F}_q$, is encoded using  an error-correcting code of block length $n$ and the resulting code symbols are respectively stored in $n$ nodes of the storage network.  A naive strategy aimed at achieving resilience against node failures is to store multiple replicas of the same data.   In an effort to reduce the storage overhead, given the massive amount of data that is currently being stored, sophisticated codes such as Reed-Solomon codes are being employed in practice. Quite apart from resiliency to node failure with reduced storage overhead, there are several other attributes that are desirable in a distributed storage system.  These include: 
\bit
\item small repair bandwidth, i.e., the amount of data download in the case of a node failure is much smaller in comparison with the file size $B$, 
\item low repair degree, i.e., the number of helper nodes contacted for node repair is small, 
\eit

In \cite{dimakis_intro}, the regenerating-code framework was introduced, which addresses the problem of reducing the repair bandwidth. In an $(n,k,d)$-regenerating code, each of the $n$ nodes in the network stores $\alpha$ code symbols drawn from a finite field $\mathbb{F}_q$. The parameter $\alpha$ is termed as the sub-packetization level of the code. A data collector can download the data by connecting to any $k$ nodes and node repair  is accomplished by connecting to any $d$ nodes and downloading $\beta \leq \alpha$ symbols from each node with $\alpha \leq d \beta << B$. Thus $d\beta$ is the repair bandwidth.  

Here one makes a distinction between functional and exact repair.  By functional repair, it is meant that a failed node will be replaced by a new node such that the resulting network continues to satisfy the data collection and node-repair properties defining a regenerating code.   An alternative to function repair is {\em exact repair} under which one demands that the replacement node store precisely the same content as the failed node.  From a practical perspective, exact repair is clearly preferred. A cut-set bound based on network-coding concepts, tells us that under functional repair, given code parameters $(n,k,d, (\alpha,\beta))$, the maximum possible size of a data file is upper bounded~\cite{dimakis_intro} by 
\bea \label{eq:cut_set_bd}
B & \leq & \sum_{i=1}^{k} \min\{\alpha,(d-i+1)\beta\} .
\eea
Furthermore, this bound has been shown to be tight using network-coding arguments related to multicasting under functional repair. For fixed values of $(n,k,d,B)$, the bound in \eqref{eq:cut_set_bd} characterizes a tradeoff between $\alpha$ and $\beta$, referred to as the Storage-Repair Bandwidth tradeoff. The two extremal points in the tradeoff are respectively, the minimum-storage regenerating (MSR) and minimum bandwidth regenerating (MBR) points which correspond to the points at which the storage and repair bandwidth are respectively minimized. At MBR point, we have
\beq
\alpha \ = \ d\beta, \ B \ = \ k\alpha - {k \choose 2}\beta,
\eeq
and at MSR point, we have
\beq
\alpha \ = \ (d-k+1)\beta, \ B \ = \ k\alpha.
\eeq
It may be noted that MSR codes are Maximal-Distance Separable (MDS)\footnote{Unless otherwise mentioned, by an MDS code, we will mean a vector MDS code, i.e., an MDS code with a vector symbol alphabet.} in nature since $B=k\alpha$. Several exact-repair codes can be found in the literature that achieve the MBR and MSR points. There are a few constructions of MDS codes in literature that repair systematic nodes downloading the minimum bandwidth of $\frac{d\alpha}{d-k+1}$. In this paper, we focus on exact-repair MDS codes that achieve optimal bandwidth while repairing any systematic node. Tamo et al. \cite{optimal_access_tamo} proposed an MSR code for any $(n,k,d=n-1)$, referred to zigzag codes, that requires a sub-packetization level of $r^{k+1}$, where $r:=n-k$. 

Also of practical interest in a regenerating code, is the number of symbols accessed in each of the helper nodes, en route to computing the $\beta$ symbols to be transferred from the particular helper node to the failed node.   Clearly, this number cannot be less than $\beta$ and in instances where it is equal to $\beta$, the code is said to be {\em access-optimal}. zigzag codes have been shown to be access-optimal. In \cite{cadambe2011polynomial}, Cadambe et al. gave constructions of access-optimal MDS codes, that optimally repair the systematic nodes. The Cadambe et al. construction builds on the construction of the MISER code \cite{miser_code}, and requires a sub-packetization level of $r^{k/r}$. In \cite{access_bandwidth}, Tamo et al. showed that the sub-packetization level of an access-optimal MDS code, that optimally repairs the systematic nodes, is lower bounded by $r^{k/r}$ under the additional proviso that the location of symbols transmitted by a helper node depends only upon the failed node and is the same for all helper nodes.   Thus the construction in \cite{cadambe2011polynomial} is optimal in terms of access. 

In this paper, we will give an alternate construction of an access-optimal MDS regenerating code having parameters $(n,k,d=n-1)$ that optimally repairs every systematic node. Our construction is motivated by the zigzag code construction, but employs a novel sequential strategy for repair of a failed node.  

\section{Two Example Code Constructions} \label{code_construction}

We will first illustrate the construction using two examples. 

%\subsection{Example 1}
\begin{example} Let $(n,k,d) = (6,4,5)$, so that $r=(n-k)=2$.  
	Here we set $\alpha = r^{\frac{k}{r}} \ = \ 2^{\frac{4}{2}} = 4$. For the code to be an access-optimal MDS code with optimal bandwidth for repair of systematic nodes, we need to satisfy the following conditions:
	\begin{itemize}
		\item[(i)]	$B = k\alpha = 16$ 
		\item[(ii)] One should be able to reconstruct all the data accessing any 4 nodes. 
		\item[(iii)] It must be possible to repair a failed systematic node by accessing $\beta = \frac{\alpha}{r} = \frac{(4)}{2} = 2$ symbols from remaining $d= 5$ nodes.
	\end{itemize} 
	
	\begin{figure}[ht!]
		\centering
		\includegraphics[width=3.0in]{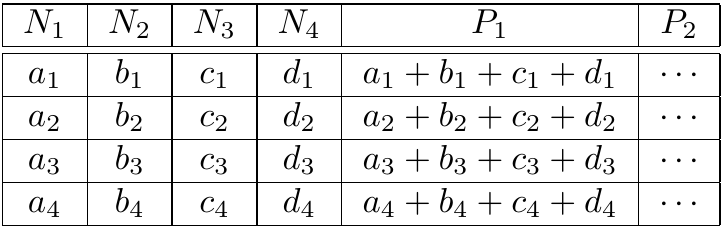}
		\caption{$P_1$ entries the codeword array for $k= 4, r=2, \alpha = 4$}
		\label{fig:k_4_r_2}
	\end{figure}
	
	In Fig.~\ref{fig:k_4_r_2}, $N_1$ to $N_4$ are systematic nodes while $P_1$ and $P_2$ are parity nodes. Clearly, the file size is 16 symbols. In the construction, the first parity (i.e., $P_1$) will always denote row parity. The remaining parities are designed to meet {\em Condition (ii)} and {\em Condition (iii)}.  In the present case of the example, there is only one remaining parity, i.e., $P_2$. Since for optimal repair, $\beta = \frac{\alpha}{2}$, we can access exactly $2$ symbols from each of the $5$ nodes.
	
	Let us define an index set $G = \{1, 2, \ldots, \alpha\}$.   The $i$th symbols in every node is indexed by the $i$th element of $G$. In our case, $G = \{1,2,3,4\}$. We split $G$ into two sets of equal size, so that each contains $2$ elements. After the splitting, we obtain two sets $G_1$, $G_2$. In the present instance, $G_1=\{1,2\}$ and $G_2=\{3,4\}$. Now we will further divide each of these sets into two sets of equal size: i.e., to split $G_1$ into $G_{11}=\{1\}$,$G_{12}=\{2\}$ and $G_2$ into $G_{21}=\{3\}$, $G_{22}=\{4\}$. We then form $G_3= G_{11} \cup G_{21} = \{1,3\}$ and $G_4= G_{12} \cup G_{22} = \{2,4\}$. Since the sets $G_{11}$, $G_{12}$, $G_{12}$, $G_{22}$ are singleton sets, no further splitting is possible and the procedure ends here. At this point, we have four sets in hand $G_1$, $G_2$, $G_3$ and $G_4$. The splitting procedure is shown in Fig.~\ref{example1_graph_division}. 
	
	\begin{figure}[ht!]
		\centering
		\includegraphics[width=3.0in]{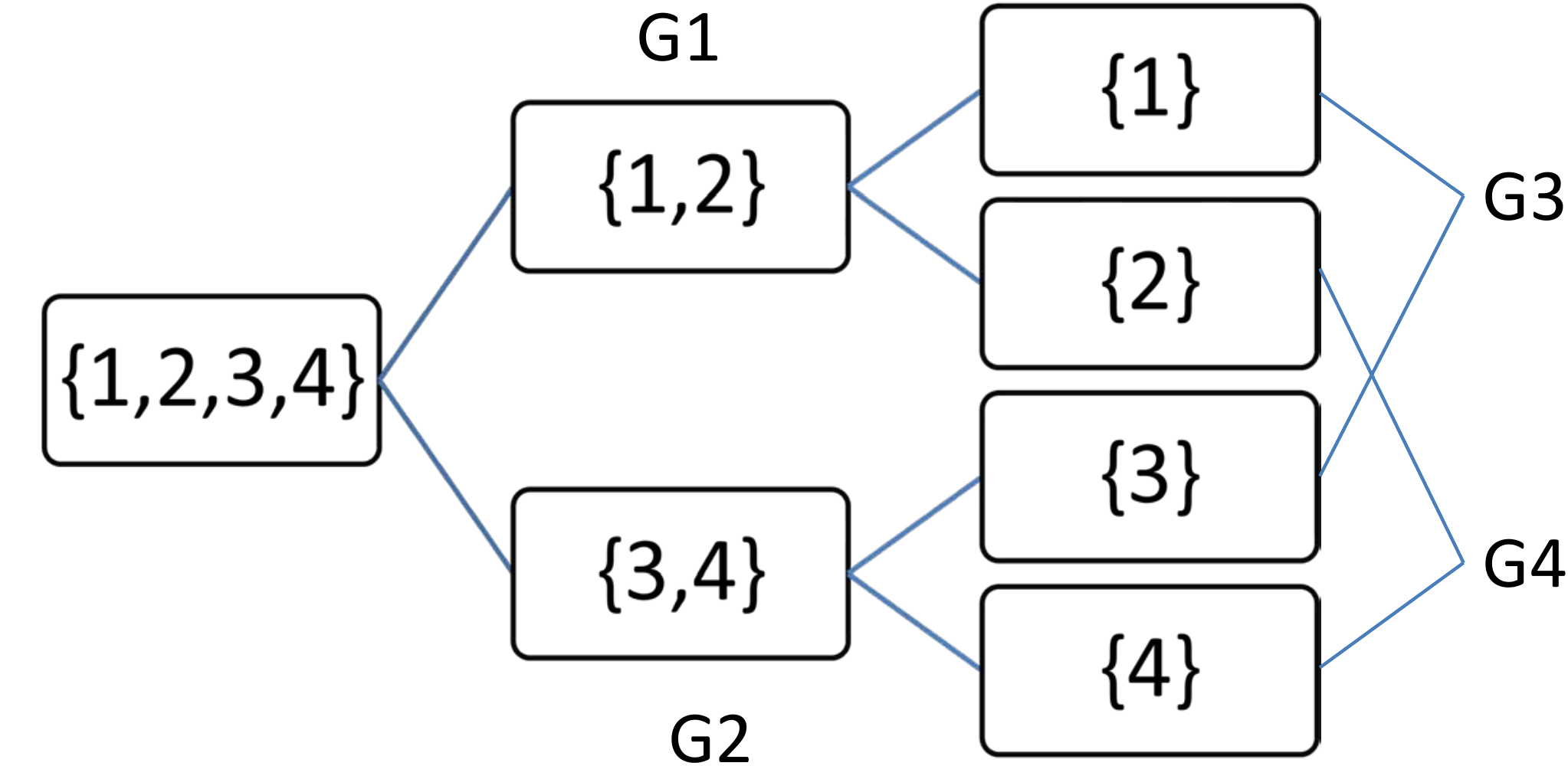}
		\caption{Splitting index set $G$ into various sets}
		\label{example1_graph_division}
	\end{figure}
	
	The sets $G_1, G_2, G_3, G_4$ correspond to the indices of the symbols of helper nodes, to be accessed while repairing the systematic nodes $N_1$, $N_2$, $N_3$ and $N_4$ respectively. For example, if $N_3$ fails, we will access symbols indexed by $G_3=\{1,3\}$ of remaining nodes. This completes the description of the repair strategy. Note that we are accessing only the optimal number $2$ of symbols from the helper nodes in accordance with {\em Condition (iii)}.   We will next show how the parity symbols belonging to node $P_2$ are computed.  
	
	Consider repair of the systematic node $N_1$. The repair will be carried out by accessing symbols $1$ and $2$ of the remaining nodes. It is clear from Fig.~\ref{fig:k_4_r_2} that, even if we do not access symbols from $P_2$, we can recover the first and second symbols of node $N_1$ i.e., $a_1$ and $a_2$. After repairing $a_1$ and $a_2$, we have access to \{$a_1$, $b_1$, $c_1$, $d_1$, $a_2$, $b_2$, $c_2$, $d_2$\}. Our goal is to obtain $a_3$ and $a_4$ using the first and second symbols of $P_2$. This requirement places a constraint on the first and second symbols of $P_2$. The first and second symbols of $P_2$ must form a set of two independent linear combinations of $a_3$, $a_4$, possibly along with linear combinations of symbols having subscripts $1$ or $2$. If we place these constraints on the symbols of $P_2$ taking into account, the repair of all $4$ systematic nodes, we will obtain the structure shown in Fig.~\ref{k_4_r_2_parity_step1}. In Fig.~\ref{k_4_r_2_parity_step1}, the set $\{i,j\}$ is short-hand notation for the collection of all message symbols having subscripts $i$ and $j$. 
	
	\begin{figure}[ht!]
		\centering
		\includegraphics[width=3.0in]{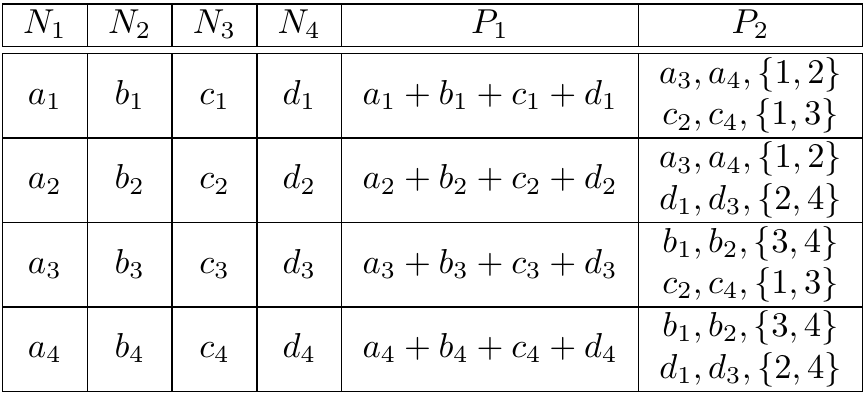}
		\caption{Designing symbols in $P_2$: STEP I}
		\label{k_4_r_2_parity_step1}
	\end{figure}
	
	In Fig.~\ref{k_4_r_2_parity_step1}, each cell in $P_2$ contains two lines: the first line corresponding to constraints arising out of repair scenario of $N_1$ or $N_2$; the second line corresponding to constraint of arising out of repair scenario of $N_3$ or $N_4$. Since both constraints must be satisfied, we have to take intersection of these two constraints. This leads to parity constraints as shown in Fig.~\ref{k_4_r_2_parity_step2}.
	
	\begin{figure}[ht!]
		\centering
		\includegraphics[width=3.0in]{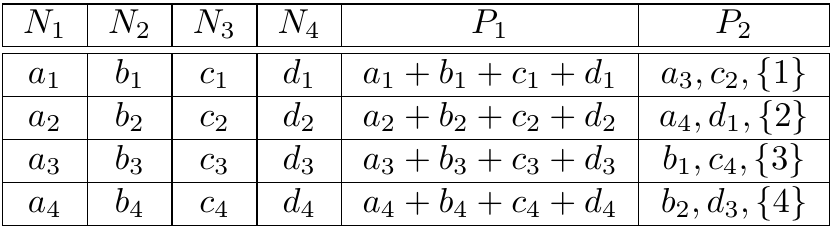}
		\caption{Designing symbols in $P_2$: STEP II }
		\label{k_4_r_2_parity_step2}
	\end{figure}
	
	So far we have identified the message symbols, whose linear combination leads to the parity symbols in $P_2$. This ensures the repair of systematic nodes, but will not guarantee the vector MDS property of the code. This will be ensured by choosing appropriate coefficients while making the linear combinations. Existence of such a choice of coefficients will be proved in Sec.~\ref{mds_property}. In this example, if we choose coefficients as shown in Fig.~\ref{k_4_r_2_parity_step3}, we can satisfy the MDS property.
	
	\begin{figure}[ht!]
		\centering
		\includegraphics[width=3.0in]{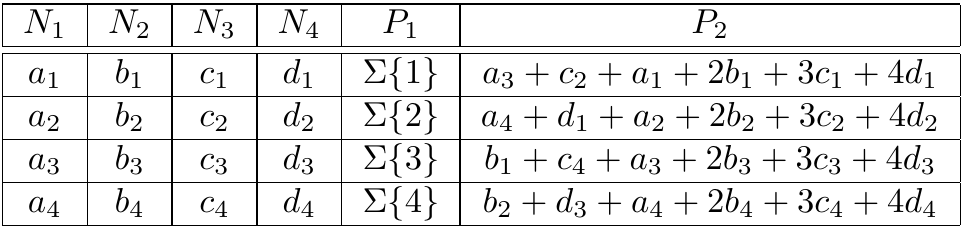}
		\caption{Codeword array for $k=4,r = 2,\alpha = 4$}
		\label{k_4_r_2_parity_step3}
	\end{figure}
	
	%Since this codes satisfies all conditions (i), (ii), (iii) it is access optimal regenerating code. Moreover this codes achieves optimal bound on $\alpha$ also, i.e., it achieves $\alpha = r^{k/r} = 2^{4/2} = 4$, this can be called as access optimal regenerating code with optimal sub-packetization level.
\end{example}

\begin{example}	$(n,k,d) = (9,6,8)$
	Here $\alpha = 3^{\frac{6}{3}} = 9$. As in the previous example, we need to satisfy the following conditions:
	
	\begin{itemize}
		\item[(a)] $B = k\alpha = 54$ 
		\item[(b)] One should be able to reconstruct all the data by accessing any $6$ nodes. 
		\item[(c)] It must be possible to repair a failed systematic node by accessing $\beta = \frac{\alpha}{r} = \frac{9}{3} = 3$ symbols from the remaining $d= 8$ nodes. 
	\end{itemize}
	
	\begin{figure}[ht!]
		\centering
		\includegraphics[width=3.0in]{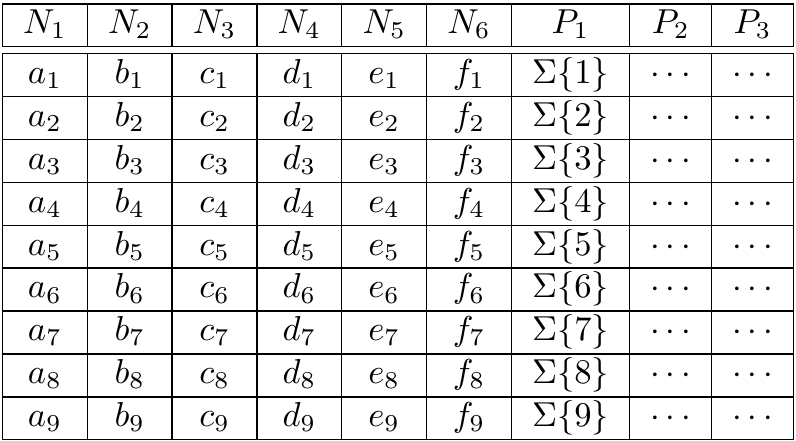}
		\caption{$P_1$ in codeword array for $k= 6,r = 3,\alpha=9$}
		\label{k_6_r_3}
	\end{figure}
	
	In this example as shown in Fig.~\ref{k_6_r_3}, $N_1$ to $N_6$ are the systematic nodes while $P_1$, $P_2$ and $P_3$ are parity nodes. Clearly, the file size is $54$. As in the previous example, $P_1$ represents row parity, while parity symbols $P_2$ and $P_3$ must be chosen in such a way that {\em Condition (b)} and {\em Condition (c)} are satisfied. Since for optimal repair, $\beta = \frac{\alpha}{3}$, we are permitted to access $3$ symbols from each of the $8$ nodes while repairing a systematic node.

	Since $\alpha=9$ in the present example, we set the index set $H = \{1, 2, \ldots, 9\}$ and symbols in each node are indexed by the elements of $H$. We will first split $H$ into three sets of equal size, each containing $3$ elements. After this splitting, we obtain the three sets $H_1=\{1,2,3\}$, $H_2=\{4,5,6\}$ and $H_3=\{7,8,9\}$. Next, we further divide each of these sets into three sets of equal size: i.e., we further split $H_1$ into $H_{11}=\{1\},H_{12}=\{2\}, H_{13}=\{3\}$, $H_2$ into $H_{21}=\{4\}, H_{22}=\{5\}, H_{23}=\{6\}$ and $H_3$ into $H_{31} = \{7\}, H_{32} = \{8\}, H_{33} = \{9\}$.   As in the previous example, in the third step, we form $H_4= H_{11} \cup H_{21} \cup H_{31} = \{1,4,7\}$, $H_5= H_{12} \cup H_{22} \cup H_{32} = \{2,5,8\}$ and $H_6= H_{13} \cup H_{23} \cup H_{33} = \{3,6,9\}$.  Since the sets $H_{ij}, i \in \{1,2,3\}, j\in \{1,2,3\}$ cannot be further divided, the procedure ends here. At the conclusion of this process, we have six sets in hand, namely $H_1$ through $H_6$.  
	
	Again, as in the case of the previous example, the sets $H_1$ through $H_6$ identify the indices of the symbols of the helper nodes to be accessed while repairing the systematic nodes $N_1$ to $N_6$ respectively. For example, if node $N_3$ fails, we will access symbols of the remaining nodes, indexed by the elements of $H_3=\{7,8,9\}$.   At this point, we have specified which symbols are transferred by a helper node in the case of failure of each of the $6$ systematic nodes.  We will next specify the contents of the parity nodes $P_2$ and $P_3$ and show that help-by-transfer as outlined above is indeed possible.   
		
	Consider repair of the systematic node $N_1$.  Since $H_1\ = \ \{1,2,3\}$, repair will be carried out by accessing symbols $1$, $2$ and $3$ of the remaining nodes. It is clear from Fig.~\ref{k_6_r_3} that the contents of the row-parity node $P_1$ and the remaining systemic nodes $N_2,N_3,\cdots,N_6$ accessed, suffice to repair the  first three symbols of $N_1$ i.e., $a_1$, $a_2$ and $a_3$. After repairing $\{a_1, a_2,a_3\}$, we have access to the message symbols with indices in $\{1,2,3\}$ from every systematic node including $N_1$. Our goal next is to recover $\{a_4, a_5 \ldots, a_9\}$ using the first, second and third symbols of $P_2$ and $P_3$. This requirement places a constraint on the first three symbols of $P_2$ and $P_3$. The first three symbols of $P_2$ and $P_3$ must be independent linear combinations of $\{a_4, a_5 \ldots, a_9\}$ along with linear combinations of symbols with indices lying in $\{1,2,3\}$. If we identify such constraints on the symbols of $P_2$ and $P_3$, while considering the repair of all $6$ systematic nodes, we will obtain the structure shown in Fig.~\ref{k_6_r_3_final}. In Fig.~\ref{k_6_r_3_final}, $\{i\}$ is shorthand notation for the collection of message symbols having index $i$.
	
	\begin{figure}[ht!]
		\centering
		\includegraphics[width=3.0in]{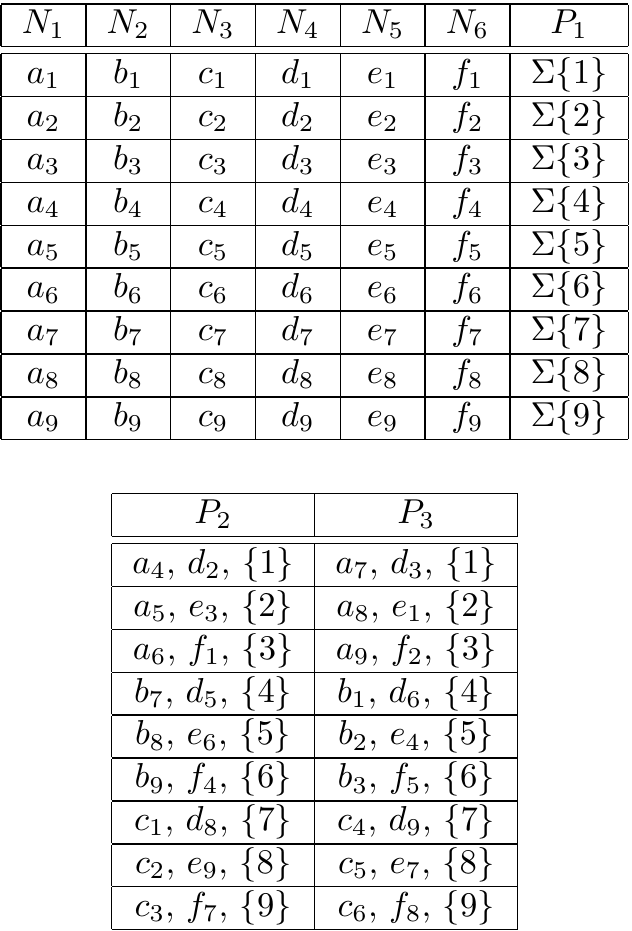}
		\caption{Codeword array $k= 6,r = 3,\alpha=9$}
		\label{k_6_r_3_final}
	\end{figure}
	
	We have identified thus far the message symbols whose linear combinations lead to the parity symbols in $P_2$ and $P_3$.   While this ensures repair of systematic nodes, it does not guarantee the MDS property of the code. This will be ensured by choosing appropriately, the coefficients which appear in the linear combinations. As will be shown in Sec.~\ref{mds_property}, a suitable set of coefficients can be found if one searches within a sufficiently large finite field.
	
	%	Since this codes satisfies all conditions (a), (b), (c), it is access optimal regenerating code. Moreover this codes also achieves optimal bound on $\alpha$, i.e., it achieves $\alpha = r^{k/r} = 2^{4/2} = 4$, so this can be called as access optimal regenerating code with optimal sub-packetization level.	
	
\end{example}

\section{The General Construction for $(n,k,d=n-1)$} \label{general_code_construction}

The general construction assumes that the number $k$ of systematic nodes is a multiple of the number $r$ of parity nodes, i.e., $k \ = \ mr$ for some integer $m \geq 1$.   In the construction, the parameter $\alpha$ is given by  
\beqn
\alpha \ = \ r^m.
\eeqn
Hence the file size $B=mr^{m+1}$. We will represent each of the $mr$ systematic nodes by a $2$-tuple index $(s,t), s \in \{1,2,\ldots, m\}, t \in \mathbb{Z}_r:=\{0,1,\ldots, r-1\}$. Each of these nodes contains $\alpha = r^m$ symbols which we will index using the $m$-tuple $(y_1, y_2, \ldots ,y_m) \in \mathbb{Z}_r^m$. 

Suppose a systematic node $(s,t)$ fails. The repair strategy will then be such that each of the remaining nodes will then transmit symbols having index $(y_1, y_2, y_3, \ldots, y_m)$ with 
\beq \label{eq:symbol_choice}
y_s \ = \ t .
\eeq
Clearly, $r^{m-1}$ symbols from every node satisfy the constraint in \eqref{eq:symbol_choice}.  We note that the code is indeed a help-by-transfer code, and that further, the indices of the helper data transmitted are dependent only upon the failed node. Furthermore, $\beta = \frac{\alpha}{r}$ ensuring that the repair is access-optimal. We will next describe the encoding process used to determine the contents of the $r$ parity nodes and verify that the construction does indeed, meet the requirements of repair and result in an MDS code.  

Consider the $x$-th parity node, $x \in \{0,1,\ldots, r-1\}$. As in the case of a systematic node, each symbol of the parity node is also indexed by an $m$-tuple $\underline{f} = (f_1, f_2, \ldots ,f_m) \in \mathbb{Z}_r^m$. Our goal is to show how the parity symbol indexed by $\underline{f}$ is computed from the message data. Based on the repair strategy, $\underline{f}$ will help in repairing the $m$ systematic nodes indexed by $(i, f_i), i \in \{1, 2, \ldots, m\}$. Keeping this in mind, we first identify the message symbols from the systematic nodes whose linear combination yields the parity symbol having index $\underline{f}$. We define $R_1$ as the set of message symbols belonging to the systematic node $(i,f_i), i \in \{1,2,\ldots,m\}$ and that are indexed by $(f_1, f_2, \ldots, f_{i-1}, f_i+x, f_{i+1}, \ldots, f_m)$. Clearly $R_1$ has size $m$. We next define $R_2$ as the set of message symbols belonging to the systematic nodes that are indexed by $(f_1, f_2, \ldots ,f_m)$.  Clearly $R_2$ consists of $mr$ elements. Here we note that in the case of the $0$-th parity, $R_1 \subset R_2$, while for the rest of the parities $ R_1 \cap R_2 = \phi$. The parity symbol $\underline{f}$ is computed as a  linear combination of the symbols from $R_1 \cup R_2$, with coefficients lying in a sufficiently large field $\mathbb{F}_q$. Furthermore, each coefficient associated to a symbol from $R_1$ must be non-zero. The choice of non-zero coefficients to symbols from $R_1$ will ensure access-optimal repair of any systematic node. The additional freedom provided by the symbols from $R_2$ will turn out to be helpful in making the code MDS. This will be made clear in the next section. The feasibility of repair of any systematic node is stated in the following theorem:

\begin{theorem} Any failed systematic node $(s,t), s \in \{1,2,\ldots, m\}, t \in \mathbb{Z}_r$ can be repaired using the repair strategy mentioned above accessing $\beta= {r}^{m-1}$ symbols from each of the remaining $(m+1)r-1$ nodes.
\end{theorem}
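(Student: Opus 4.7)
The plan is to show repair in two phases: first use the row parity (the $0$-th parity node) together with the accessed symbols from the remaining systematic nodes to reconstruct those symbols of the failed node whose index $\underline{y}$ satisfies $y_s = t$; then use the remaining $r-1$ parity nodes to reconstruct, one at a time, each symbol whose index has $y_s \neq t$. The count is automatic from the repair rule \eqref{eq:symbol_choice}: each of the $(m+1)r - 1$ non-failed nodes contributes exactly the $r^{m-1}$ symbols indexed by $(y_1,\ldots,y_m)$ with $y_s = t$.

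For Phase~1, fix any index $\underline{f}$ with $f_s = t$. Because $R_1 \subseteq R_2$ for $x=0$, the symbol of the $0$-th parity indexed by $\underline{f}$ is a linear combination of the symbols at index $\underline{f}$ in every systematic node, and in particular the contribution of the failed node $(s,t)$ at index $\underline{f}$ sits in $R_1$ (take $i=s$: the prescribed index is $(f_1,\ldots,f_{s-1},f_s+0,f_{s+1},\ldots,f_m) = \underline{f}$), so its coefficient is non-zero. The other systematic contributions at index $\underline{f}$ all have $s$-th coordinate equal to $t$, hence lie among the $r^{m-1}$ accessed symbols of each helper node. Thus we can solve the $0$-th parity equation for the failed symbol. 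Letting $\underline{f}$ range over all indices with $f_s = t$ yields $r^{m-1}$ symbols of $(s,t)$.

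For Phase~2, fix $x \in \{1,\ldots,r-1\}$ and an index $\underline{f}$ with $f_s = t$, and examine the $x$-th parity symbol at $\underline{f}$, which is a linear combination of $R_1 \cup R_2$. Every symbol in $R_2$ sits at index $\underline{f}$ (which has $s$-th coordinate $t$), so is either directly accessed or already recovered in Phase~1. A symbol in $R_1$ is the one at node $(i,f_i)$ indexed by $(f_1,\ldots,f_{i-1},f_i+x,f_{i+1},\ldots,f_m)$; for $i \neq s$ the $s$-th coordinate of that index remains $f_s = t$, so the symbol was accessed from the helper. The only unknown quantity is the $R_1$ contribution from $i=s$, namely the symbol of $(s,t)$ at index $(f_1,\ldots,f_{s-1},t+x,f_{s+1},\ldots,f_m)$, and it has a non-zero coefficient by construction, so the parity equation solves for it uniquely. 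As $x$ runs over $\mathbb{Z}_r\setminus\{0\}$ and the coordinates $f_j$ for $j\neq s$ run over $\mathbb{Z}_r$, the recovered indices of $(s,t)$ sweep out exactly the $(r-1)r^{m-1}$ indices with $s$-th coordinate different from $t$, completing the repair.

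The only substantive thing to verify is the combinatorial claim in Phase~2 that, for the relevant $\underline{f}$ and $x$, every $R_1 \cup R_2$ symbol in the parity equation is known except for a single symbol of the failed node; this is the main (but entirely routine) obstacle, and it falls out immediately from the definitions of $R_1$, $R_2$ and the access rule $y_s = t$. Nothing in this proof depends on the specific non-zero coefficients, so the question of MDS-ness is orthogonal and is deferred to Sec.~\ref{mds_property}.
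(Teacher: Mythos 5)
Your proposal is correct and follows the same two-phase (sequential) repair strategy as the paper's proof: first recover the symbols of the failed node with $y_s=t$ from the row parity, then use each remaining parity node to peel off the symbols with $y_s=t+j$. You simply spell out in more detail the verification that in each parity equation exactly one term is unknown, which the paper states more tersely as the assertion that $H_j$ is formed from $M(T_j)$ and $S\cup M(T_0)$.
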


\bpf First, consider the subset of symbols belonging to the systematic node $(s,t)$ having indices lying in the set $T_0$ given by 
\beq
T_0 = \{ (y_1,y_2, \ldots, y_m) \mid y_s = t \}.
\eeq 
As per the repair strategy outlined above, we will have access to $r^{m-1}$ symbols from each of the remaining systematic nodes, namely, those symbols whose symbol indices satisfy $y_s = t$. Let us denote this set of message symbols by $S$. In addition, we have access to the set $H_0$ of $r^{m-1}$ parity symbols from the $0$-th parity node whose indices satisfy $y_s=t$. With the aid of the elements in $S \cup H_0$, we will be able to repair the symbols in the failed systematic node $(s,t)$, indexed by the set $T_0$. Let us denote the set of these symbols by $M(T_0)$.

Next consider the subset of symbols from $(s,t)$
\beq
T_j = \{ (y_1,y_2, \ldots, y_m) \mid y_s = t+j \}
\eeq for a fixed value of $j \in \mathbb{Z}_r^{+}$. Let us denote this subset of symbols by $M(T_j)$.
Note that we have access to the set $H_j$ of $r^{m-1}$ parity symbols from the $j$-th parity node whose indices satisfy $y_s=t$.  These parity symbols are formed as linear combination of symbols from $M(T_j)$ and $S \cup M(T_0)$. Hence with the help of symbols from $S \cup M(T_0) \cup H_j$, it is possible to repair $M(T_j)$. Since $\bigcup_{j \in \mathbb{Z}_r} M(T_j)$ covers the entire set of symbols in the node $(s,t)$, we are done.
\epf

In the next section, we will show that there exists an appropriate choice of coefficients for symbols from $R_1$ and $R_2$ that ensures the vector MDS property of the code.

\section{Proof of the MDS Property} \label{mds_property}

%	\textbf{Fact:} A code over $\mathbb{F}_q^{\alpha}$ that stores a file of size $B=k\alpha$ will have vector MDS property, iff out of $n$ node, any $k$ node are enough to reconstruct the whole data. This is equivalent to say that the code is resilient to any $n-k$ node failures. 

In the previous section, we have identified for each of the $\alpha$ symbols within a parity node, a set of message symbols $R_1 \cup R_2$ whose linear combinations yield the parity symbol, in such a way that access-optimal repair is possible. Note that the sets $R_1$ and $R_2$ vary depending on the particular parity symbol of interest. In the present section, we will show that we can always find an appropriate set of coefficients used in forming linear combinations of the message symbols in $R_1 \cup R_2$ that make the code an MDS code.  Note that in the description of the repair process, it was assumed that the coefficients attached to the symbols from $R_1$ were non-zero. In the case of parity symbols belonging to the $0$-th (row) parity node, since $R_1 \subset R_2$, it is sufficient that coefficients of symbols in $R_2$ be non-zero. For symbols from the $j$-th parity node $j \in \mathbb{Z}_r^+$, we have $R_1 \cap R_2 = \phi$, and hence it is sufficient that the coefficients of every symbol in $R_1$ have a fixed non-zero coefficient, say $c \neq 0$. The value of $c$ is fixed for every parity symbol of each of the parity nodes $j \in \mathbb{Z}_r^+$.

\begin{theorem}  There exists a choice of non-zero coefficients from $\mathbb{F}_q$ to symbols from $R_2$ and a common nonzero coefficient $c$ to symbols from $R_1$ such that the code is MDS, if $q \geq {n \choose k}r^{m+1}$.
\end{theorem}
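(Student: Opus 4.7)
The plan is to prove existence of suitable coefficients by a standard Schwartz--Zippel argument. Treat the common $R_1$-coefficient $c$ and each $R_2$-coefficient $\gamma^{(x,\underline{f})}_{(s,t)}$ as indeterminates over $\mathbb{F}_q$, and write $N$ for the total number of indeterminates. The MDS property is equivalent to the statement that, for every size-$r$ subset $E$ of nodes, the contents of the nodes in $E$ can be recovered from those of the remaining $k$ nodes. Setting $a := |E \cap \{\text{systematic nodes}\}|$, the case $a = 0$ is trivial. For $a \geq 1$, the $r-a$ erased parity nodes are simply unused and recovery reduces to solving a square linear system: $a\alpha$ unknowns (the symbols of the $a$ erased systematic nodes) against $a\alpha$ equations (from the $a$ surviving parity nodes, once the contributions of the $k-a$ surviving systematic nodes are moved to the right-hand side). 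Let $D_E$ denote the determinant of this system's coefficient matrix, viewed as a polynomial in the indeterminates.

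The crux of the proof is to show that each $D_E$ is not identically zero. A natural first attempt is to evaluate $D_E$ at $c = 1$ and all $\gamma = 0$, reducing the matrix to a sparse $\{0,1\}$-incidence matrix encoded by the $R_1$-assignments; invertibility in this reduced form would follow from an injective matching between surviving parity positions and erased systematic symbols, a property forced by the nested index-set splitting of Section~\ref{code_construction}. When this direct evaluation turns out to be singular (as may happen when several systematic nodes are erased simultaneously), the $R_2$ coefficients supply extra freedom: view $D_E$ as a polynomial in the $\gamma$'s and isolate a leading monomial that survives all cancellations, ideally by induction on $m$ mirroring the recursive splitting of the index set. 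Formalizing this non-vanishing claim is the principal technical obstacle.

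Granting $D_E \not\equiv 0$ for every $E$ with $a \geq 1$, form the product $D := \prod_{E \,:\, a \geq 1} D_E$. Each matrix entry has degree at most $1$ in the indeterminates, so $\deg D_E \leq a\alpha \leq r^{m+1}$; combined with $|\{E : a \geq 1\}| = \binom{n}{r} - 1 = \binom{n}{k} - 1$, this yields $\deg D \leq (\binom{n}{k} - 1)r^{m+1}$. By Schwartz--Zippel applied over the grid $(\mathbb{F}_q^*)^N$ --- on which a non-zero polynomial of total degree $d$ has at most $d(q-1)^{N-1}$ zeros --- a point at which $D$ is non-zero exists whenever $q - 1 > \deg D$, a condition implied by $q \geq \binom{n}{k}r^{m+1}$ (since $r^{m+1} \geq 2$ in every non-degenerate case). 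Any such point supplies the desired non-zero values for $c$ and all the $\gamma$'s and renders the code MDS.
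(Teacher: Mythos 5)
Your overall framework---reduce the MDS property to non-vanishing of erasure-pattern determinants, then apply a union bound / Schwartz--Zippel argument over the field---is the right shape of argument, and your degree accounting ($\deg D_E \le a\alpha \le r^{m+1}$, at most $\binom{n}{k}$ patterns) matches the paper's. But the proof has a genuine gap exactly where you flag it: you never establish that $D_E \not\equiv 0$, and the two strategies you sketch for doing so don't work as stated. The evaluation at $c=1$, all $\gamma = 0$ kills the row-parity node $P_1$ entirely (for the $0$-th parity $R_1 \subset R_2$, so every entry of that node's block becomes zero), so $D_E$ vanishes at that point whenever $P_1$ is among the surviving parities used for decoding---which is generic, not merely the multi-systematic-erasure case you mention. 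The alternative of ``isolating a leading monomial in the $\gamma$'s by induction on $m$'' is left entirely unformalized, and it is precisely this non-vanishing claim that is the content of the theorem.

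The paper avoids the multivariate non-vanishing problem altogether by a simple two-step specialization that you are missing. First, fix the $R_2$ coefficients once and for all so that the code obtained by pretending $R_1 = \emptyset$ on every non-row parity is MDS: since every $R_2$ is exactly one full row of message symbols, this is just $\alpha$ independent scalar $[n,k]$ MDS codes stacked vertically, which exist over any field of size at least $n$. With those $\gamma$'s frozen, the only remaining indeterminate is the single common coefficient $c$, and for each $k$-subset $D$ the determinant $g_D(c)$ of the decoding matrix $\hat E_D$ satisfies $g_D(0) = \det E_D \neq 0$ by the MDS property of the stacked code. Thus $g_D \not\equiv 0$ is immediate, $\deg g_D \le k_2\alpha \le r^{m+1}$, and the product $h(c) = \prod_D g_D(c)$ is a non-zero univariate polynomial of degree at most $\binom{n}{k}r^{m+1}$, so a non-zero $c$ avoiding all its roots exists once $q$ is at least that large. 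This one-variable reduction is the key idea: it converts the ``principal technical obstacle'' in your write-up into a triviality, and it is what makes the proof go through. If you want to repair your version, adopt the same specialization rather than treating the $\gamma$'s as free indeterminates.
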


%A linear code storing a file of size $B = k\alpha$, has each symbol of the codeword made up of a linear combination of $k\alpha$ message symbols. While reconstructing data from any $k$ node (i.e., $k\alpha$ symbols), if we can show that these are linearly independent combinations of $k\alpha$ message symbols, code is vector MDS code.

\begin{proof} It follows from the construction that the set $R_2$ corresponding to a parity symbol in any parity node $j \in \mathbb{Z}_r$ is the set of all message symbols lying in the same row. Suppose that the set $R_1$ is an empty set for each parity symbol belonging to any of the parity nodes $j \in \mathbb{Z}_r^+$. In such case, the code will take the form of a vector MDS code obtained by vertically stacking $\alpha$ scalar MDS codes, and such a code indeed exists. Hence it is possible to make an assignment of coefficients to symbols from $R_2$ such that the code becomes MDS. Consider such an assignment, and let $C$ be a codeword for the code. 
	
	Next, consider a set $D$ of $k$ nodes comprising of $k_1$ systematic nodes, and $k_2$ parity nodes. We denote by $C_D$ a $(B \times 1)$-size vector obtained by vectorizing the codeword $C$ restricted to the set of nodes $D$. The vectorization is such that the first $k_1\alpha$ are the messages symbols, and remaining $k_2\alpha$ the parity symbols. Then we have
	\bea
	C_D & = & \left[ \begin{array}{c} A_D \\
		B_D \end{array} \right] M,
	\eea
	where $M$ is the $(B \times 1)$-size message vector, $A_D$ is a matrix of size ${k_1\alpha \times B}$, and $B_D$ is a matrix of size ${k_2\alpha \times B}$. Note that the matrix 
	\bea
	E_D & = & \left[ \begin{array}{c} A_D \\
		B_D \end{array} \right]
	\eea must always be invertible for every choice $D$ of $k$ nodes. Hence its determinant is non-zero.
	
	However since the set $R_1 \neq \phi$ for any parity symbol, each of the symbols in $R_1$ has a coefficient $c$, the matrix $B_D$ will not correspond to the actual MDS code of our interest. So we need to replace $B_D$ with a modified version $\hat{B}_D$ that will satisfy our requirements. For every row of $B_D$, we construct the corresponding row of $\hat{B}_D$ as follows: We choose to keep the non-zero entries as such. Then for every message symbol in $R_1 \setminus R_2$, we populate the corresponding entry in the row with $c$ replacing $0$. Since $R_1 \setminus R_2$ has at most $m$ elements, the number of positions thus modified will be at most $m$. Then we claim that the matrix
	\bea
	\hat{E}_D & = & \left[ \begin{array}{c} A_D \\
		\hat{B}_D \end{array} \right]
	\eea 
	thus obtained is invertible. Consider the determinant of $\hat{E}_D$ as a polynomial $g_D(c)$, in the indeterminate $c$. Clearly, the polynomial must evaluate to a non-zero value for the assignment $c=0$ since otherwise, the determinant of $E_D$ would be zero. Hence $g(c)$ cannot be the zero polynomial. We also have
	\beq
	\text{deg}(g_D(c)) \leq k_2\alpha \leq r\alpha = r^{m+1}
	\eeq
	Next consider the polynomial
	\bea
	h(c) & = & \prod_{D \subset [n], |D|=k} g_D(c).
	\eea
	Cleary $h(c)$ is not identically zero, its degree is upper bounded by ${n \choose k}r^{m+1}$. Hence it is sufficient that we find an assignment for $c$ the evaluates the polynomial $h(c)$ to a non-zero value. This is possible if we choose $q \geq {n \choose k}r^{m+1}$.
	
\end{proof}
	  
\section{Conclusions}	  
\label{discuss}

We presented an alternative construction of an access-optimal code that repairs systematic nodes. The parameter in our construction is $m$. For designing codes with $r$ parity nodes, we will set $k$ to $mr$ and $\alpha$ to $r^m$.  Our construction was inspired by construction of zigzag codes.  A novel feature of our construction is that in our construction, the repair of symbols is carried out sequentially in contrast to parallel repair in the case of zigzag codes. Here, one set of $\alpha/r$ symbols are independently repaired first, but for the rest of the symbols, the previously repaired $\alpha/r$ symbols are also used along with the accessed data from other nodes. 

Since our code has an optimal level of sub-packetization, it will be interesting to investigate whether the level of sub-packetization suffices for the repair of parity nodes as well.  To date, the best known access-optimal construction that can repair both systematic and parity node failure has $\alpha = r^{k+1}$ which is much larger than the achievable bounds in the case of repair of just the systematic nodes.

\bibliographystyle{IEEEtran}
\bibliography{paper_bib}

\end{document}